\DeclareMathOperator*{\argmin}{arg\,min}
\newtheorem{proposition}{Proposition}
\pgfplotsset{compat=newest}
\def\BibTeX{{\rm B\kern-.05em{\sc i\kern-.025em b}\kern-.08em
    T\kern-.1667em\lower.7ex\hbox{E}\kern-.125emX}}
\newcommand\copyrighttext{%
  \footnotesize
  \textcopyright 2025 IEEE. Personal use of this material is permitted.  Permission from IEEE must be obtained for all other uses, in any current or future media, including reprinting/republishing this material for advertising or promotional purposes, creating new collective works, for resale or redistribution to servers or lists, or reuse of any copyrighted component of this work in other works. DOI: \href{https://doi.org/10.1109/ICASSP49660.2025.10889620}{10.1109/ICASSP49660.2025.10889620}}
\newcommand\copyrightnotice{%
\begin{tikzpicture}[remember picture,overlay]
\node[anchor=south,yshift=5pt] at (current page.south) {\fbox{\parbox{\dimexpr\textwidth-\fboxsep-\fboxrule\relax}{\copyrighttext}}};
\end{tikzpicture}%
}
\begin{document}

\title{A Comparative Study of Invariance-Aware Loss Functions for Deep Learning-based Gridless Direction-of-Arrival Estimation\thanks{This work was supported in part by IEEE Signal Processing Society
Scholarship, and in part by NSF under Grant CCF-2225617 and Grant CCF-2124929.}
}

\author{\IEEEauthorblockN{Kuan-Lin Chen}
\IEEEauthorblockA{\textit{Department of Electrical and Computer Engineering} \\
\textit{University of California, San Diego}\\
\texttt{kuc029@ucsd.edu}}
\and
\IEEEauthorblockN{Bhaskar D. Rao}
\IEEEauthorblockA{\textit{Department of Electrical and Computer Engineering} \\
\textit{University of California, San Diego}\\
\texttt{brao@ucsd.edu}}
}

\maketitle
\copyrightnotice
\begin{abstract}
Covariance matrix reconstruction has been the most widely used guiding objective in gridless direction-of-arrival (DoA) estimation for sparse linear arrays. Many semidefinite programming (SDP)-based methods fall under this category. Although deep learning-based approaches enable the construction of more sophisticated objective functions, most methods still rely on covariance matrix reconstruction. In this paper, we propose new loss functions that are invariant to the scaling of the matrices and provide a comparative study of losses with varying degrees of invariance. The proposed loss functions are formulated based on the scale-invariant signal-to-distortion ratio between the target matrix and the Gram matrix of the prediction. Numerical results show that a scale-invariant loss outperforms its non-invariant counterpart but is inferior to the recently proposed subspace loss that is invariant to the change of basis. These results provide evidence that designing loss functions with greater degrees of invariance is advantageous in deep learning-based gridless DoA estimation.
\end{abstract}

\begin{IEEEkeywords}
direction of arrival, sparse linear arrays, array processing, deep learning, neural networks
\end{IEEEkeywords}

\section{Introduction}
DoA estimation aims to determine the directions or angles of source signals in space using a sensor array. It is a key topic in array processing and a crucial component in real-world applications such as microphone array processing \cite{chen2023dnn}, radar systems \cite{zhang2023joint}, and wireless communications \cite{gaber2014study}. To achieve greater degrees of freedom, sparse linear arrays (SLAs) have garnered significant attention, including nested arrays \cite{pal2010nested} and minimum redundancy arrays (MRAs) \cite{van2002optimum}. With an SLA that has no ``holes'' in its co-array, it is possible to localize more sources than sensors, identifying up to $\mathcal{O}(n^2)$ uncorrelated sources with only $n$ sensors \cite{pal2010nested}. Taking MRAs for example, $4$-element and $6$-element MRAs can resolve up to $6$ and $13$ sources, respectively. In the gridless case, these extra degrees of freedom are often achieved by augmenting the spatial covariance matrix from an SLA to the covariance of its corresponding virtual uniform linear array (ULA) via some fitting criteria and constraints \cite{li2015off,qiao2017maximum,wang2019grid}. Methods based on convex optimization in this category include redundancy averaging and direct augmentation \cite{pillai1985new}, sparse and parametric approach (SPA) \cite{yang2014discretization}, and StructCovMLE \cite{pote2023maximum}, just to name a few.

While optimization-based methods have long dominated the field \cite{stoica2010new,stoica2010spice,stoica2012spice,zhou2018direction}, recent advances have seen the rise of deep learning techniques as powerful alternatives \cite{liu2018direction,papageorgiou2021deep}.
These emerging approaches formulate covariance matrix reconstruction as supervised learning problems and often train deep neural network (DNN) models to fit the target covariance matrices. In \cite{wu2022gridless}, a squared loss function was proposed to fit the Toeplitz matrix formed from a convolutional neural network to the noiseless co-array covariance matrix. In \cite{barthelme2021doa}, the Frobenius norm and the \textit{affine invariant distance} were used to fit the Gram matrix of the matrix output of a DNN to the ground-truth covariance matrix. These deep learning-based methods have shown promising results over widely used optimization-based methods such as SPA. One possible explanation for why a DNN can achieve superior performance is that the mapping between the covariance of an SLA and the co-array covariance is continuous for a number of sources. Because a DNN model can approximate a continuous function on a compact subset of a Euclidean space \cite{cybenko1989approximation,hornik1989multilayer,chen2022improved}, it may be possible for a DNN to estimate the co-array covariance matrix.

Instead of relying on covariance matrix fitting, a recently proposed methodology called \textit{subspace representation learning} \cite{chen2024subspace} showed that it is possible to directly learn the signal and noise subspaces from the sample covariance of an SLA by minimizing a collection of loss functions imposed on subspaces viewed on a union of Grassmann manifolds. These loss functions are designed to possess several invariance properties to expand the solution space of the DNN model. Specifically, \cite{chen2024subspace} points out that covariance matrix reconstruction is actually a more challenging problem than learning signal and noise subspaces. This difficulty arises from the need to estimate signal powers when reconstructing a spatial covariance matrix, a requirement not present in DoA estimation, where the MUltiple SIgnal Classification (MUSIC) \cite{schmidt1986multiple} or root-MUSIC algorithms \cite{barabell1983improving,rao1989performance} rely solely on knowledge of the signal or noise subspace.

In this paper, we provide a comparative study of loss functions that exploit different degrees of invariance. In particular, we propose a new family of scale-invariant loss functions based on the scale-invariant signal-to-distortion ratio (SI-SDR) \cite{le2019sdr} and show that it expands the solution space of the widely used Frobenius norm. In addition, we analyze the invariance properties of two recently proposed loss functions that measure the length of the shortest curve between two elements in an appropriate space. Numerical results show that the proposed scale-invariant covariance reconstruction loss is superior to the Frobenius norm. However, scale-invariant losses are inferior to the ``subspace loss'' \cite{chen2024subspace} that yields the greatest degrees of invariance. These results imply that a more comprehensive invariance facilitates the learning process of a DNN model, serving as evidence on the benefits of invariance-aware losses for gridless DoA estimation.
\section{Assumptions and the signal model} \label{sec:assumptions}
Consider $k$ far-field narrowband source signals $s_1,s_2,\cdots,s_k$ with wavelength $\lambda$ impinging on an $m$-element ULA with spacing $d=\frac{\lambda}{2}$ from directions $\left\{\theta_1,\theta_2,\cdots,\theta_k\right\}\subset[0,\pi]$. Let the array manifold be $\mathbf{a}:[0,\pi]\to\mathbb{C}^m$ such that
$
    [\mathbf{a}(\theta)]_{i}\coloneqq
    e^{j2\pi\left(i-1-\frac{m-1}{2}\right)\frac{d}{\lambda}\cos\theta}
$
for $i\in\{1,2,\cdots,m\}\coloneqq [m]$. The received signal or \textit{snapshot} at $t\in[T]$ then can be modeled as
$
    \mathbf{y}(t)=\mathbf{A}(\boldsymbol{\theta})\mathbf{s}(t)+\mathbf{n}(t)
$
where $\mathbf{s}(t)=\begin{bmatrix}s_1(t)&s_2(t)&\cdots&s_k(t)\end{bmatrix}^{\mathsf{T}}$ is a random source signal vector sampled from the complex  circularly-symmetric Gaussian distribution $\mathcal{CN}\left(\mathbf{0},\text{diag}(p_1,p_2,\cdots,p_k)\right)$, $\mathbf{n}(t)$ is a random noise vector sampled from $\mathcal{CN}\left(\mathbf{0},\eta\mathbf{I}\right)$, and $\mathbf{A}(\boldsymbol{\theta})=\begin{bmatrix}\mathbf{a}(\theta_1)&\mathbf{a}(\theta_2)&\cdots&\mathbf{a}(\theta_k)\end{bmatrix}$. All source signal vectors $\mathbf{s}(1),\mathbf{s}(2),\cdots,\mathbf{s}(T)$ and noise vectors $\mathbf{n}(1),\mathbf{n}(2),\cdots,\mathbf{n}(T)$ are independent and identically distributed (i.i.d.), respectively. $\mathbf{s}(t_1)$ and $\mathbf{n}(t_2)$ are uncorrelated for all $t_1,t_2\in[T]$. Let $\mathcal{S}=\{e_1,e_2,\cdots,e_n\}\subseteq[m]$. Then an $n$-element SLA can be created by selecting a subset of sensors ordered in $\mathcal{S}$. Let
\begin{equation}
    \left[\boldsymbol{\Gamma}\right]_{ij}=
    \begin{cases}
    1, &\text{ if } e_i=j,\\
    0, &\text{ otherwise},
    \end{cases}, i\in[n], j\in[m].
\end{equation}
The snapshot received at the SLA at $t\in[T]$ is then given by $\mathbf{y}_{\mathcal{S}}(t)=\boldsymbol{\Gamma}\mathbf{y}(t)$ and the sample covariance matrix at this SLA is $\hat{\mathbf{R}}_{\mathcal{S}}=\frac{1}{T}\sum_{t=1}^T\mathbf{y}_{\mathcal{S}}(t)\mathbf{y}_{\mathcal{S}}^{\mathsf{H}}(t)$. The noiseless covariance matrix at the corresponding ULA is denoted by $\mathbf{R}$. We are interested in estimating the directions $\theta_1,\theta_2,\cdots,\theta_k$ given $\hat{\mathbf{R}}_{\mathcal{S}}$ and $k$ under $T\geq m$.
\section{Scale-invariant Loss Functions}
Training a deep learning-based covariance matrix reconstruction model can be formulated as minimizing the empirical risk
\begin{equation} \label{erm:first}
    \min_{W} \quad \frac{1}{L}\sum_{l=1}^Ld\left(g\circ f_W\left(\hat{\mathbf{R}}_{\mathcal{S}}^{(l)}\right),h\left(\mathbf{R}^{(l)}\right)\right)
\end{equation}
where $f_W:\mathbb{C}^{n\times n}\to\mathbb{C}^{m\times m}$ is a DNN model with parameters $W$, $\{\hat{\mathbf{R}}_{\mathcal{S}}^{(l)},\mathbf{R}^{(l)}\}_{l=1}^L$ is a dataset of sample covariance matrices at the SLA and noiseless covariance matrices at the corresponding ULA, $h$ is a function that extracts the learning target, and $g$ is a transformation that ensures some properties of a valid covariance matrix. For example, picking the function
\begin{equation} \label{eq:positive_definite_g}
    g(\mathbf{E})=\mathbf{E}\mathbf{E}^{\mathsf{H}}+\delta\mathbf{I}
\end{equation}
for some $\delta\geq 0$ enforces the predicted matrix being always positive semidefinite (or positive definite). One can also design $g$ such that the prediction is Toeplitz.
$d:\mathbb{C}^{m\times m}\times \mathbb{C}^{m\times m}\to[0,\infty)$ is a loss function of choice.
The framework of (\ref{erm:first}) is generic in the sense that it encapsulates different empirical risk minimization problems proposed in \cite{barthelme2021doa,wu2022gridless,chen2024subspace}. The most intuitive loss function is the Frobenius norm
\begin{equation}
    d_{\text{Fro}}\left(\hat{\mathbf{R}},\mathbf{R}\right)=\left\lVert \hat{\mathbf{R}}-\mathbf{R}\right\rVert_F
\end{equation}
which yields a solution space of a single point in $\mathbb{C}^{m\times m}$ as $d_{\text{Fro}}(\hat{\mathbf{R}},\mathbf{R})=0$ if and only if $\hat{\mathbf{R}}=\mathbf{R}$. However, $\alpha\mathbf{R}$ for any $\alpha\in\mathbb{R}\setminus\{0\}$ leads to identical signal and noise subspaces, while $d_{\text{Fro}}(\alpha\mathbf{R},\mathbf{R})\to\infty$ as $\alpha\to\infty$ or $\alpha\to-\infty$ for any positive definite matrix $\mathbf{R}$. To avoid such a penalization and allow a larger solution space, we propose the following \textit{scale-invariant reconstruction loss}:
\begin{equation} \label{eq:si_loss}
    d_{\text{SI}}\left(\hat{\mathbf{R}},\mathbf{R}\right)=-\log\left(\frac{\left\lVert\alpha^*\mathbf{R}\right\rVert_F}{\epsilon+\left\lVert\alpha^*\mathbf{R}-\hat{\mathbf{R}}\right\rVert_F}\right)
\end{equation}
where $\epsilon \geq 0$ is a constant and
\begin{equation}
    \alpha^*=\argmin_{\alpha\in\mathbb{R}} \left\lVert\alpha\mathbf{R}-\hat{\mathbf{R}}\right\rVert_F.
\end{equation}
The scale-invariant reconstruction loss $d_{\text{SI}}$ is invariant to scaling of the matrices in the following sense:
\begin{equation}
    d_{\text{SI}}\left(\gamma\mathbf{R},\mathbf{R}\right)\to-\infty \quad \text{as} \quad \epsilon\to 0
\end{equation}
for every $\gamma\neq 0$.
Approximately, this property allows $d_{\text{SI}}$ to expand the solution space from a point to a line in $\mathbb{C}^{m\times m}$. If $\epsilon>0$, in general we have $d_{\text{SI}}\left(\gamma\mathbf{R}_1,\mathbf{R}_1\right)\neq d_{\text{SI}}\left(\gamma\mathbf{R}_2,\mathbf{R}_2\right)$ for $\mathbf{R}_1\neq\mathbf{R}_2$. This implies that different penalties are given to two fittings even though they lead to the correct subspaces. The notion of scale-invariant reconstruction extends beyond fitting covariance matrices. Denoting $\mathbf{E}_s$ as a matrix whose columns are eigenvectors of $\mathbf{R}$, the scale-invariant reconstruction loss can be applied to the signal subspace matrix $h(\mathbf{R})=\mathbf{E}_s\mathbf{E}_s^{\mathsf{H}}$ as follows
\begin{equation} \label{eq:si_loss_sig}
-\log\left(\frac{\left\lVert\alpha^*\mathbf{E}_s\mathbf{E}_s^{\mathsf{H}}\right\rVert_F}{\epsilon+\left\lVert\alpha^*\mathbf{E}_s\mathbf{E}_s^{\mathsf{H}}-g\circ f_W\left(\hat{\mathbf{R}}_{\mathcal{S}}\right)\right\rVert_F}\right)
\end{equation}
where
\begin{equation} \label{eq:si_loss_sig_alpha}
        \alpha^*=\argmin_{\alpha\in\mathbb{R}}\left\lVert\alpha\mathbf{E}_s\mathbf{E}_s^{\mathsf{H}}-g\circ f_W\left(\hat{\mathbf{R}}_{\mathcal{S}}\right)\right\rVert_F.
    \end{equation}
The same formulation as in (\ref{eq:si_loss_sig}) and (\ref{eq:si_loss_sig_alpha}) can be applied to the noise subspace, where $h(\mathbf{R})=\mathbf{E}_n\mathbf{E}_n^{\mathsf{H}}$ and $\mathbf{E}_n$ denotes the noise subspace.

\section{Geodesic distances and their invariances}
Because all covariance targets with a proper $h$ can reside in the positive definite cone, one can use the affine invariant distance \cite{barthelme2021doa}
\begin{equation} \label{eq:affine_invariant_distance}
    d_{\text{Aff}}\left(\hat{\mathbf{R}},\mathbf{R}\right)=\left\lVert\log\left(\mathbf{R}^{-\frac{1}{2}}\hat{\mathbf{R}}\mathbf{R}^{-\frac{1}{2}}\right)\right\rVert_F
\end{equation}
which measures the length of the shortest curve between two positive definite matrices \cite{bhatia2009positive} for covariance matrix fitting. The $\log$ in (\ref{eq:affine_invariant_distance}) is the matrix logarithm. Although (\ref{eq:affine_invariant_distance}) was used by \cite{barthelme2021doa} and \cite{chen2024subspace}, its invariance property does not seem to be well recognized. Proposition \ref{prop:aff_dist_scaling} shows that the affine invariant distance is not invariant to scaling but the rate of increased distance is a logarithmic growth in terms of scaling, much slower than the linear rate of the Frobenius norm.
\begin{proposition} \label{prop:aff_dist_scaling}
    For every $m$-by-$m$ Hermitian matrix such that $\mathbf{R}\succ 0$ and for every $\alpha>0$,
    \begin{equation}
        d_{\textnormal{Aff}}\left(\alpha\mathbf{R},\mathbf{R}\right)=\sqrt{m}\lvert\log\alpha\rvert.
    \end{equation}
\end{proposition}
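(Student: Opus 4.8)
The plan is to substitute $\hat{\mathbf{R}}=\alpha\mathbf{R}$ directly into the definition of $d_{\textnormal{Aff}}$ in \eqref{eq:affine_invariant_distance} and simplify the argument of the matrix logarithm before evaluating the Frobenius norm. First I would use the fact that $\mathbf{R}\succ 0$ is Hermitian to guarantee that the Hermitian positive definite square root $\mathbf{R}^{1/2}$ (and hence $\mathbf{R}^{-1/2}$) exists and is unique, so the expression $\mathbf{R}^{-1/2}\hat{\mathbf{R}}\mathbf{R}^{-1/2}$ is well defined. Pulling the scalar $\alpha$ out of the congruence, the inner matrix becomes $\alpha\,\mathbf{R}^{-1/2}\mathbf{R}\mathbf{R}^{-1/2}$.

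The key algebraic step is the cancellation $\mathbf{R}^{-1/2}\mathbf{R}\mathbf{R}^{-1/2}=\mathbf{I}$, which follows because the square root commutes with $\mathbf{R}$ and $\mathbf{R}^{-1/2}\mathbf{R}^{1/2}=\mathbf{I}$. This collapses the argument of the logarithm to the scalar matrix $\alpha\mathbf{I}$, so that
\begin{equation}
    d_{\textnormal{Aff}}\left(\alpha\mathbf{R},\mathbf{R}\right)=\left\lVert\log\left(\alpha\mathbf{I}\right)\right\rVert_F.
\end{equation}
Next I would invoke the spectral definition of the matrix logarithm: since $\alpha\mathbf{I}$ has every eigenvalue equal to $\alpha>0$ (so the principal logarithm is well defined and real), its matrix logarithm is $\log(\alpha\mathbf{I})=(\log\alpha)\mathbf{I}$.

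Finally, evaluating the Frobenius norm of a scalar multiple of the identity gives $\left\lVert(\log\alpha)\mathbf{I}\right\rVert_F=\lvert\log\alpha\rvert\,\lVert\mathbf{I}\rVert_F=\sqrt{m}\,\lvert\log\alpha\rvert$, since $\lVert\mathbf{I}\rVert_F=\sqrt{m}$ for the $m$-by-$m$ identity. This yields the claimed identity. I do not anticipate a genuine obstacle here; the result is essentially a one-line consequence of the congruence cancellation and the scalar logarithm. The only point requiring care is justifying that the matrix square root and matrix logarithm behave as expected, which is guaranteed precisely by the hypotheses $\mathbf{R}\succ 0$ and $\alpha>0$ (ensuring all relevant eigenvalues are positive and the principal logarithm is real-valued), so the restriction to $\alpha>0$ in the statement is exactly what makes $\lvert\log\alpha\rvert$ well defined.
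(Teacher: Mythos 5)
Your proof is correct and follows essentially the same route as the paper: substitute $\hat{\mathbf{R}}=\alpha\mathbf{R}$ into \eqref{eq:affine_invariant_distance}, use the congruence cancellation to reduce the argument of the logarithm to a scalar multiple of $\mathbf{I}$, and evaluate $\lVert(\log\alpha)\mathbf{I}\rVert_F=\sqrt{m}\lvert\log\alpha\rvert$. The only (immaterial) difference is that the paper's one-line proof arrives at $\log\bigl(\tfrac{1}{\alpha}\mathbf{I}\bigr)$ rather than $\log(\alpha\mathbf{I})$ — i.e., it effectively sandwiches with $(\alpha\mathbf{R})^{-1/2}$, which is equivalent by symmetry of the affine invariant distance since $\lvert\log(1/\alpha)\rvert=\lvert\log\alpha\rvert$ — and your write-up is simply more explicit about well-definedness of the square root and principal logarithm.
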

\begin{proof}
    $d_{\text{Aff}}\left(\alpha\mathbf{R},\mathbf{R}\right)=\left\lVert\log\left(\frac{1}{\alpha}\mathbf{I}\right)\right\rVert_F=\left\lVert-\log\left(\alpha\right)\mathbf{I}\right\rVert_F.$
\end{proof}
Despite $d_{\text{Aff}}$ being not scale-invariant, its increased distance is invariant to the underlying matrix and only depends on the scaling factor, unlike the Frobenius norm, which depends on the matrix. In other words, for $\mathbf{R}_1\neq\mathbf{R}_2$, we have $d_{\text{Aff}}\left(\alpha\mathbf{R}_1,\mathbf{R}_1\right)=d_{\text{Aff}}\left(\alpha\mathbf{R}_2,\mathbf{R}_2\right)$, ensuring the same penality for perfect fittings.

Loss functions with the greatest degrees of invariance are perhaps the ones proposed in the \textit{subspace representation learning} methodology \cite{chen2024subspace} that avoids reconstructing covariance matrices. In subspace representation learning, a DNN is instructed to generate subspace representations which can be viewed as elements lying in a union of Grassmann manifolds $\bigcup_{k=1}^{M-1}\text{Gr}(k,m)$, reflecting signal subspaces of different dimensions. With these subspace representations, learning signal or noise subspaces can then be achieved by minimizing a collection of geodesic distances. The geodesic distance $d_{\text{Gr}-k}$ on $\text{Gr}(k,m)$ measures the length of the shortest path between two subspaces $\mathcal{U}_1,\mathcal{U}_2\in\text{Gr}(k,m)$, which can be calculated by
\begin{equation}
    d_{\text{Gr}-k}\left(\mathcal{U}_1,\mathcal{U}_2\right)=\sqrt{\sum_{i=1}^k\phi_i^2\left(\mathcal{U}_1,\mathcal{U}_2\right)}
\end{equation}
where $\phi_k\left(\mathcal{U}_1,\mathcal{U}_2\right)$ is the $i$-th principal angle between $\mathcal{U}_1$ and $\mathcal{U}_2$. Let $\mathbf{U}_1$ and $\mathbf{U}_2$ be the matrices whose columns form unitary bases for $\mathcal{U}_1$ and $\mathcal{U}_2$, respectively. For every $i\in[k]$, the $i$-th principal angle can be calculated by
\begin{equation}
    \phi_i\left(\mathcal{U}_1,\mathcal{U}_2\right)=\cos^{-1}\sigma_i
\end{equation}
where $\sigma_1\geq\sigma_2\geq\cdots\geq\sigma_k$ are the singular values of $\mathbf{U}_1^{\mathsf{H}}\mathbf{U}_2$.
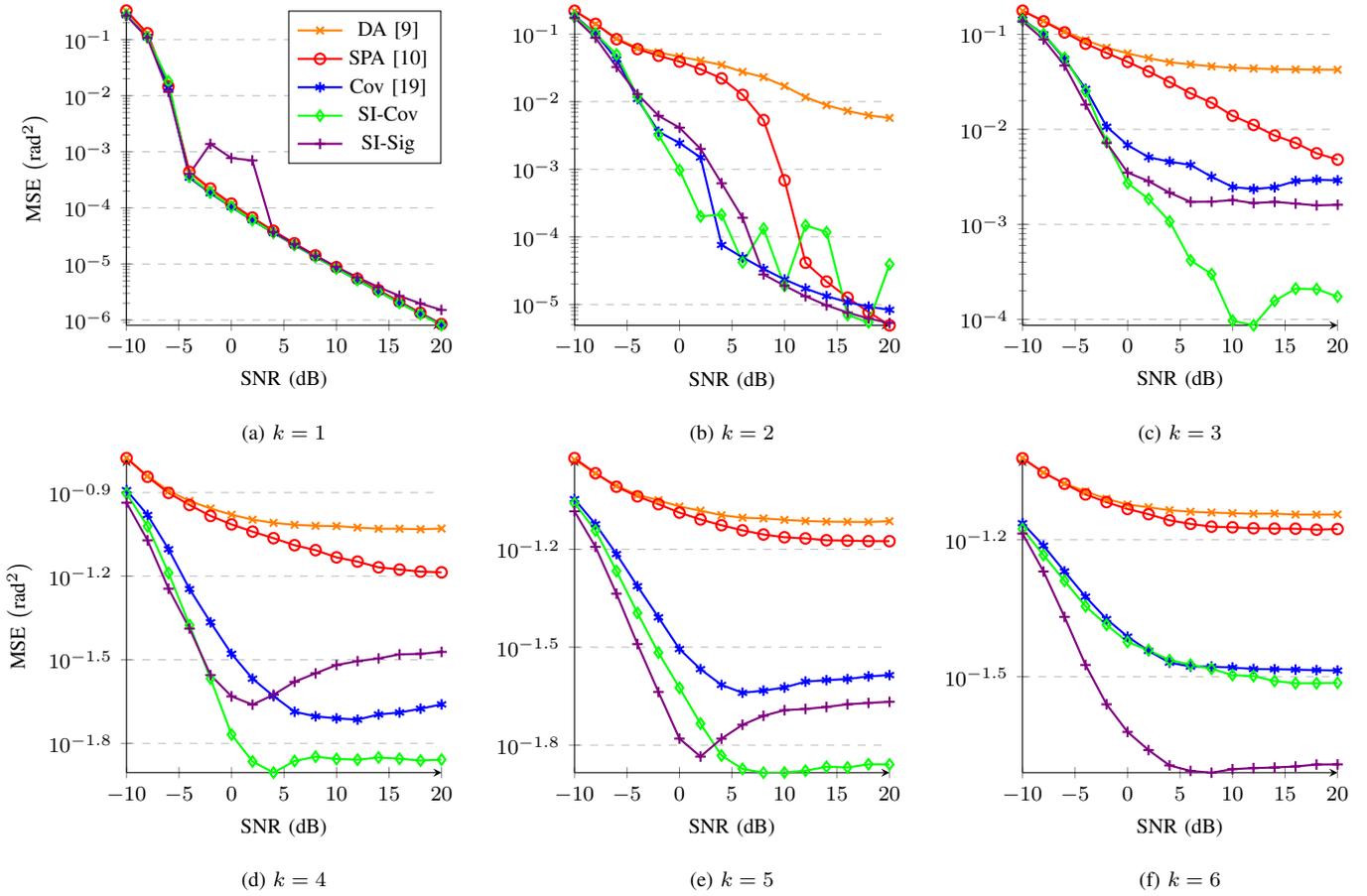
\begin{figure*}[!htb]
\centering
\begin{tikzpicture}
\begin{axis}[
    font=\footnotesize,
    width=5.8cm,
    height=5.8cm,
    axis lines = left,
    xlabel = SNR (dB),
    ylabel = MSE $\left(\text{rad}^2\right)$,
    legend style={at={(1.0,1.0)},anchor=north east},
    xtick={-10,-5,0,5,10,15,20},
    title={\footnotesize (a) $k=1$},
    title style={at={(0.5,-0.45)},anchor=south},
    xmax = 20,
    ymajorgrids=true,
    grid style=dashed,
    ymode=log
]

\addplot [
    color=orange,
    mark=x,
    thick
]
coordinates {
(-10,0.32149)
(-8,0.1259)
(-6,0.013591)
(-4,0.00042253)
(-2,0.00021344)
(0,0.00011506)
(2,6.4946e-05)
(4,3.7921e-05)
(6,2.2687e-05)
(8,1.3806e-05)
(10,8.5e-06)
(12,5.2748e-06)
(14,3.2908e-06)
(16,2.0603e-06)
(18,1.293e-06)
(20,8.1278e-07)
};

\addlegendentry{DA \cite{pillai1985new}}

\addplot [
    color=red,
    mark=o,
    thick
]
coordinates {
(-10,0.32959)
(-8,0.12969)
(-6,0.014528)
(-4,0.0004392)
(-2,0.00022123)
(0,0.00011896)
(2,6.6996e-05)
(4,3.9059e-05)
(6,2.3361e-05)
(8,1.4226e-05)
(10,8.7737e-06)
(12,5.4565e-06)
(14,3.4123e-06)
(16,2.1415e-06)
(18,1.347e-06)
(20,8.4838e-07)
};

\addlegendentry{SPA \cite{yang2014discretization}}

\addplot [
    color=blue,
    mark=asterisk,
    thick
]
coordinates {
(-10,0.27403)
(-8,0.11084)
(-6,0.013844)
(-4,0.00034608)
(-2,0.00018285)
(0,0.0001034)
(2,6.0278e-05)
(4,3.5772e-05)
(6,2.1733e-05)
(8,1.3373e-05)
(10,8.2899e-06)
(12,5.1725e-06)
(14,3.2435e-06)
(16,2.0462e-06)
(18,1.2974e-06)
(20,8.287e-07)
};
\addlegendentry{Cov \cite{barthelme2021doa}}

\addplot [
    color=green,
    mark=diamond,
    thick
]
coordinates {
(-10,0.27432)
(-8,0.11058)
(-6,0.018142)
(-4,0.00036085)
(-2,0.00018841)
(0,0.00010473)
(2,6.0455e-05)
(4,3.5987e-05)
(6,2.1809e-05)
(8,1.3388e-05)
(10,8.2935e-06)
(12,5.1577e-06)
(14,3.2246e-06)
(16,2.0245e-06)
(18,1.2771e-06)
(20,8.0926e-07)
};

\addlegendentry{SI-Cov}

\addplot [
    color=violet,
    mark=+,
    thick
]
coordinates {
(-10,0.26699)
(-8,0.10784)
(-6,0.011783)
(-4,0.00040134)
(-2,0.0013812)
(0,0.00077321)
(2,0.00070049)
(4,3.7098e-05)
(6,2.2569e-05)
(8,1.4002e-05)
(10,8.8805e-06)
(12,5.7811e-06)
(14,3.8813e-06)
(16,2.7037e-06)
(18,1.9716e-06)
(20,1.5159e-06)
};
\addlegendentry{SI-Sig}

\end{axis}

\begin{axis}[
    font=\footnotesize,
    width=5.8cm,
    height=5.8cm,
    at={(6cm,0cm)},
    axis lines = left,
    xlabel = SNR (dB),
    xmax = 20,
    xtick={-10,-5,0,5,10,15,20},
    title={\footnotesize (b) $k=2$},
    title style={at={(0.5,-0.45)},anchor=south},
    ymajorgrids=true,
    grid style=dashed,
    ymode=log
]

\addplot [
    color=orange,
    mark=x,
    thick
]
coordinates {
(-10,0.21932)
(-8,0.141)
(-6,0.08633)
(-4,0.062491)
(-2,0.053697)
(0,0.046362)
(2,0.040376)
(4,0.035005)
(6,0.027577)
(8,0.023047)
(10,0.017014)
(12,0.011681)
(14,0.0088765)
(16,0.0072971)
(18,0.0062721)
(20,0.0057518)
};

\addplot [
    color=red,
    mark=o,
    thick
]
coordinates {
(-10,0.22133)
(-8,0.14157)
(-6,0.083972)
(-4,0.059928)
(-2,0.047836)
(0,0.039159)
(2,0.030224)
(4,0.022114)
(6,0.012627)
(8,0.0053382)
(10,0.00068585)
(12,4.1451e-05)
(14,2.1805e-05)
(16,1.2606e-05)
(18,7.7259e-06)
(20,4.899e-06)
};

\addplot [
    color=blue,
    mark=asterisk,
    thick
]
coordinates {
(-10,0.18579)
(-8,0.10209)
(-6,0.042301)
(-4,0.010954)
(-2,0.0036017)
(0,0.002439)
(2,0.0014945)
(4,7.5926e-05)
(6,4.975e-05)
(8,3.355e-05)
(10,2.3451e-05)
(12,1.7205e-05)
(14,1.3286e-05)
(16,1.0818e-05)
(18,9.2696e-06)
(20,8.3221e-06)
};

\addplot [
    color=green,
    mark=diamond,
    thick
]
coordinates {
(-10,0.18019)
(-8,0.10141)
(-6,0.049856)
(-4,0.011877)
(-2,0.0032144)
(0,0.00097577)
(2,0.0002006)
(4,0.0002135)
(6,4.1946e-05)
(8,0.00013252)
(10,1.8851e-05)
(12,0.00014798)
(14,0.00011747)
(16,6.9654e-06)
(18,5.4302e-06)
(20,3.9211e-05)
};

\addplot [
    color=violet,
    mark=+,
    thick
]
coordinates {
(-10,0.17319)
(-8,0.087977)
(-6,0.032618)
(-4,0.012934)
(-2,0.0061911)
(0,0.004117)
(2,0.002002)
(4,0.0006198)
(6,0.00019154)
(8,2.781e-05)
(10,1.8922e-05)
(12,1.3202e-05)
(14,9.7094e-06)
(16,7.5821e-06)
(18,6.1994e-06)
(20,5.3116e-06)
};
\end{axis}

\begin{axis}[
    font=\footnotesize,
    width=5.8cm,
    height=5.8cm,
    at={(12cm,0cm)},
    axis lines = left,
    xlabel = SNR (dB),
    xmax = 20,
    xtick={-10,-5,0,5,10,15,20},
    title={\footnotesize (c) $k=3$},
    title style={at={(0.5,-0.45)},anchor=south},
    ymajorgrids=true,
    grid style=dashed,
    ymode=log
]

\addplot [
    color=orange,
    mark=x,
    thick
]
coordinates {
(-10,0.17622)
(-8,0.13843)
(-6,0.10786)
(-4,0.086616)
(-2,0.07221)
(0,0.06309)
(2,0.056258)
(4,0.050897)
(6,0.048287)
(8,0.046068)
(10,0.044585)
(12,0.043737)
(14,0.043058)
(16,0.042758)
(18,0.042527)
(20,0.042428)
};

\addplot [
    color=red,
    mark=o,
    thick
]
coordinates {
(-10,0.17748)
(-8,0.13696)
(-6,0.1048)
(-4,0.079731)
(-2,0.063749)
(0,0.051303)
(2,0.040709)
(4,0.031466)
(6,0.023971)
(8,0.019057)
(10,0.013905)
(12,0.011135)
(14,0.0086113)
(16,0.0071971)
(18,0.0055991)
(20,0.0048057)
};
\addplot [
    color=blue,
    mark=asterisk,
    thick
]
coordinates {
(-10,0.14311)
(-8,0.099995)
(-6,0.055025)
(-4,0.02672)
(-2,0.010665)
(0,0.0068238)
(2,0.0050852)
(4,0.0045685)
(6,0.0042226)
(8,0.0031719)
(10,0.00247)
(12,0.002353)
(14,0.0024551)
(16,0.0028579)
(18,0.0029424)
(20,0.0029049)
};

\addplot [
    color=green,
    mark=diamond,
    thick
]
coordinates {
(-10,0.14063)
(-8,0.10066)
(-6,0.056548)
(-4,0.025184)
(-2,0.0074023)
(0,0.0027177)
(2,0.0018478)
(4,0.0010737)
(6,0.00041766)
(8,0.0002996)
(10,9.6652e-05)
(12,8.662e-05)
(14,0.00015673)
(16,0.00021039)
(18,0.00020809)
(20,0.00017354)
};
\addplot [
    color=violet,
    mark=+,
    thick
]
coordinates {
(-10,0.13659)
(-8,0.088331)
(-6,0.046897)
(-4,0.018225)
(-2,0.0071983)
(0,0.0035061)
(2,0.0028447)
(4,0.0021509)
(6,0.0017243)
(8,0.001733)
(10,0.0017995)
(12,0.0016749)
(14,0.0017251)
(16,0.0016551)
(18,0.0015838)
(20,0.0016072)
};
\end{axis}

\begin{axis}[
    font=\footnotesize,
    width=5.8cm,
    height=5.8cm,
    at={(0cm,-6cm)},
    axis lines = left,
    xlabel = SNR (dB),
    ylabel = MSE $\left(\text{rad}^2\right)$,
    xmax = 20,
    xtick={-10,-5,0,5,10,15,20},
    title={\footnotesize (d) $k=4$},
    title style={at={(0.5,-0.45)},anchor=south},
    ymajorgrids=true,
    grid style=dashed,
    ytick={10^(-1.8),10^(-1.5),10^(-1.2),10^(-0.9)},
    ymode=log
]

\addplot [
    color=orange,
    mark=x,
    thick
]
coordinates {
(-10,0.16738)
(-8,0.14431)
(-6,0.12755)
(-4,0.11742)
(-2,0.11032)
(0,0.10478)
(2,0.10061)
(4,0.097953)
(6,0.096413)
(8,0.095598)
(10,0.095404)
(12,0.094225)
(14,0.093308)
(16,0.093346)
(18,0.092937)
(20,0.093461)
};

\addplot [
    color=red,
    mark=o,
    thick
]
coordinates {
(-10,0.16726)
(-8,0.14328)
(-6,0.12544)
(-4,0.11355)
(-2,0.10373)
(0,0.096713)
(2,0.091096)
(4,0.086303)
(6,0.081437)
(8,0.078186)
(10,0.073667)
(12,0.071142)
(14,0.067854)
(16,0.066641)
(18,0.065481)
(20,0.065101)
};

\addplot [
    color=blue,
    mark=asterisk,
    thick
]
coordinates {
(-10,0.12834)
(-8,0.10454)
(-6,0.078785)
(-4,0.056708)
(-2,0.043089)
(0,0.033224)
(2,0.027041)
(4,0.023417)
(6,0.020578)
(8,0.019815)
(10,0.019482)
(12,0.019281)
(14,0.02012)
(16,0.02045)
(18,0.021105)
(20,0.021867)
};

\addplot [
    color=green,
    mark=diamond,
    thick
]
coordinates {
(-10,0.12494)
(-8,0.095219)
(-6,0.064789)
(-4,0.042099)
(-2,0.027083)
(0,0.017045)
(2,0.013648)
(4,0.012448)
(6,0.013704)
(8,0.014201)
(10,0.013935)
(12,0.013856)
(14,0.014104)
(16,0.013962)
(18,0.013757)
(20,0.013863)
};
\addplot [
    color=violet,
    mark=+,
    thick
]
coordinates {
(-10,0.11566)
(-8,0.084767)
(-6,0.056867)
(-4,0.040928)
(-2,0.027865)
(0,0.023367)
(2,0.021822)
(4,0.023685)
(6,0.02636)
(8,0.028269)
(10,0.030277)
(12,0.031257)
(14,0.03197)
(16,0.033064)
(18,0.033212)
(20,0.033746)
};
\end{axis}

\begin{axis}[
    font=\footnotesize,
    width=5.8cm,
    height=5.8cm,
    at={(6cm,-6cm)},
    axis lines = left,
    xlabel = SNR (dB),
    xmax = 20,
    xtick={-10,-5,0,5,10,15,20},
    title={\footnotesize (e) $k=5$},
    title style={at={(0.5,-0.45)},anchor=south},
    ymajorgrids=true,
    grid style=dashed,
    ytick={10^(-1.8),10^(-1.5),10^(-1.2)},
    ymode=log
]

\addplot [
    color=orange,
    mark=x,
    thick
]
coordinates {
(-10,0.1187)
(-8,0.10775)
(-6,0.098873)
(-4,0.092633)
(-2,0.089197)
(0,0.085315)
(2,0.082889)
(4,0.080248)
(6,0.078893)
(8,0.07847)
(10,0.077674)
(12,0.07707)
(14,0.076695)
(16,0.076605)
(18,0.076523)
(20,0.076946)
};

\addplot [
    color=red,
    mark=o,
    thick
]
coordinates {
(-10,0.12008)
(-8,0.10792)
(-6,0.098096)
(-4,0.091661)
(-2,0.086733)
(0,0.081738)
(2,0.077888)
(4,0.074771)
(6,0.072125)
(8,0.070035)
(10,0.068686)
(12,0.068125)
(14,0.067253)
(16,0.067075)
(18,0.066848)
(20,0.066806)
};

\addplot [
    color=blue,
    mark=asterisk,
    thick
]
coordinates {
(-10,0.089885)
(-8,0.07534)
(-6,0.061011)
(-4,0.048657)
(-2,0.039045)
(0,0.031222)
(2,0.027094)
(4,0.024239)
(6,0.022928)
(8,0.023277)
(10,0.023804)
(12,0.024805)
(14,0.025071)
(16,0.025262)
(18,0.025751)
(20,0.025976)
};

\addplot [
    color=green,
    mark=diamond,
    thick
]
coordinates {
(-10,0.087347)
(-8,0.072041)
(-6,0.054112)
(-4,0.040288)
(-2,0.030438)
(0,0.023741)
(2,0.018447)
(4,0.014725)
(6,0.013405)
(8,0.013049)
(10,0.013065)
(12,0.013236)
(14,0.013606)
(16,0.013539)
(18,0.013849)
(20,0.013832)
};
\addplot [
    color=violet,
    mark=+,
    thick
]
coordinates {
(-10,0.082503)
(-8,0.064196)
(-6,0.04614)
(-4,0.032339)
(-2,0.023058)
(0,0.016597)
(2,0.014621)
(4,0.016588)
(6,0.018297)
(8,0.01949)
(10,0.020276)
(12,0.020456)
(14,0.020744)
(16,0.021138)
(18,0.021344)
(20,0.021513)
};
\end{axis}

\begin{axis}[
    font=\footnotesize,
    width=5.8cm,
    height=5.8cm,
    at={(12cm,-6cm)},
    axis lines = left,
    xlabel = SNR (dB),
    xmax = 20,
    xtick={-10,-5,0,5,10,15,20},
    title={\footnotesize (f) $k=6$},
    title style={at={(0.5,-0.45)},anchor=south},
    ymajorgrids=true,
    grid style=dashed,
    ytick={10^(-1.5),10^(-1.2)},
    ymode=log
]

\addplot [
    color=orange,
    mark=x,
    thick
]
coordinates {
(-10,0.095256)
(-8,0.088543)
(-6,0.083781)
(-4,0.080507)
(-2,0.077581)
(0,0.075363)
(2,0.074358)
(4,0.073245)
(6,0.07268)
(8,0.072396)
(10,0.07213)
(12,0.071942)
(14,0.07202)
(16,0.071655)
(18,0.071643)
(20,0.071672)
};

\addplot [
    color=red,
    mark=o,
    thick
]
coordinates {
(-10,0.095178)
(-8,0.088579)
(-6,0.083672)
(-4,0.079309)
(-2,0.076308)
(0,0.073705)
(2,0.071789)
(4,0.069603)
(6,0.068354)
(8,0.067403)
(10,0.067193)
(12,0.066859)
(14,0.066749)
(16,0.066723)
(18,0.066382)
(20,0.066607)
};

\addplot [
    color=blue,
    mark=asterisk,
    thick
]
coordinates {
(-10,0.068616)
(-8,0.061247)
(-6,0.053786)
(-4,0.047358)
(-2,0.042304)
(0,0.038724)
(2,0.03604)
(4,0.033969)
(6,0.033295)
(8,0.03328)
(10,0.033125)
(12,0.032887)
(14,0.032847)
(16,0.032781)
(18,0.0327)
(20,0.032624)
};

\addplot [
    color=green,
    mark=diamond,
    thick
]
coordinates {
(-10,0.066659)
(-8,0.058419)
(-6,0.051292)
(-4,0.045123)
(-2,0.041089)
(0,0.037693)
(2,0.03616)
(4,0.034388)
(6,0.033615)
(8,0.032912)
(10,0.031884)
(12,0.031729)
(14,0.030948)
(16,0.030597)
(18,0.030606)
(20,0.030667)
};

\addplot [
    color=violet,
    mark=+,
    thick
]
coordinates {
(-10,0.0651)
(-8,0.053766)
(-6,0.04277)
(-4,0.033569)
(-2,0.027519)
(0,0.023944)
(2,0.021848)
(4,0.020224)
(6,0.019672)
(8,0.019495)
(10,0.019844)
(12,0.019957)
(14,0.020009)
(16,0.020115)
(18,0.020313)
(20,0.020326)
};
\end{axis}
\end{tikzpicture}
\caption{MSE vs. SNR. The proposed scale-invariant covariance matrix reconstruction approach (SI-Cov) outperforms DA, SPA, and the deep learning-based covariance matrix reconstruction approach (Cov) using the Frobenius norm when $k>2$.
}
\label{fig:mse_vs_snr}
\end{figure*}
\section{Numerical Results}
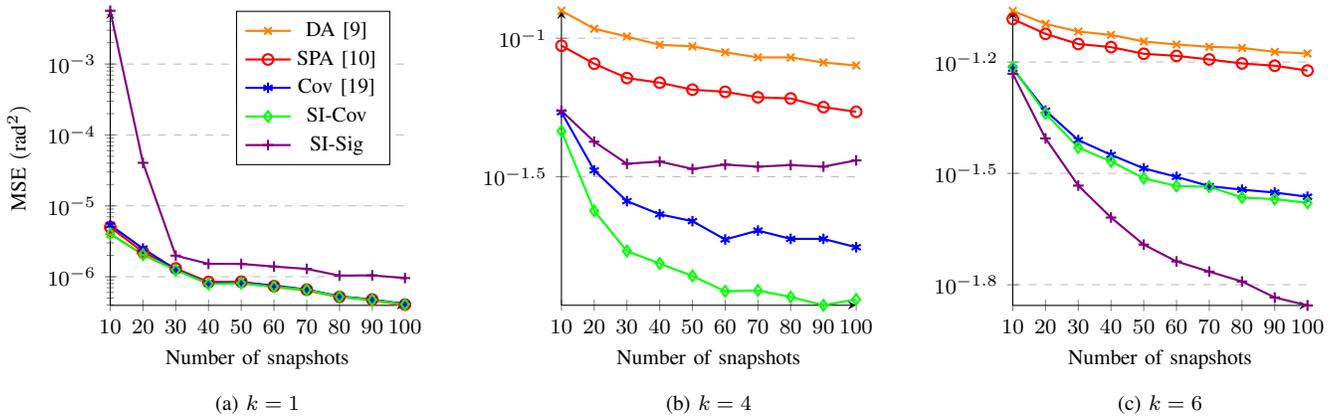
\begin{figure*}[!htb]
\centering
\begin{tikzpicture}
\begin{axis}[
    font=\footnotesize,
    width=5.5cm,
    height=5.5cm,
    axis lines = left,
    xlabel = Number of snapshots,
    ylabel = MSE $(\text{rad}^2)$,
    legend style={at={(0.95,1.0)},anchor=north east},
    xtick={10,20,30,40,50,60,70,80,90,100},
    title={\footnotesize (a) $k=1$},
    title style={at={(0.5,-0.45)},anchor=south},
    xmax = 100,
    ymajorgrids=true,
    grid style=dashed,
    ymode=log
]

\addplot [
    color=orange,
    mark=x,
    thick
]
coordinates {
(10,4.0147e-06)
(20,2.0507e-06)
(30,1.2627e-06)
(40,8.0249e-07)
(50,8.1278e-07)
(60,7.1713e-07)
(70,6.3481e-07)
(80,5.1891e-07)
(90,4.6418e-07)
(100,3.945e-07)
};

\addlegendentry{DA \cite{pillai1985new}}

\addplot [
    color=red,
    mark=o,
    thick
]
coordinates {
(10,4.9882e-06)
(20,2.2346e-06)
(30,1.308e-06)
(40,8.4195e-07)
(50,8.4831e-07)
(60,7.3582e-07)
(70,6.5667e-07)
(80,5.194e-07)
(90,4.7727e-07)
(100,4.0441e-07)
};

\addlegendentry{SPA \cite{yang2014discretization}}

\addplot [
    color=blue,
    mark=asterisk,
    thick
]
coordinates {
(10,5.347e-06)
(20,2.4982e-06)
(30,1.2649e-06)
(40,8.0955e-07)
(50,8.287e-07)
(60,7.5088e-07)
(70,6.5928e-07)
(80,5.2943e-07)
(90,4.7618e-07)
(100,4.1216e-07)
};

\addlegendentry{Cov \cite{barthelme2021doa}}

\addplot [
    color=green,
    mark=diamond,
    thick
]
coordinates {
(10,4.0176e-06)
(20,2.0454e-06)
(30,1.2407e-06)
(40,7.9246e-07)
(50,8.0926e-07)
(60,7.2215e-07)
(70,6.5258e-07)
(80,5.2122e-07)
(90,4.7012e-07)
(100,4.0668e-07)
};

\addlegendentry{SI-Cov}

\addplot [
    color=violet,
    mark=+,
    thick
]
coordinates {
(10,0.0056408)
(20,4.0404e-05)
(30,1.9838e-06)
(40,1.5207e-06)
(50,1.5159e-06)
(60,1.3914e-06)
(70,1.2892e-06)
(80,1.0358e-06)
(90,1.0457e-06)
(100,9.5656e-07)
};

\addlegendentry{SI-Sig}

\end{axis}

\begin{axis}[
    font=\footnotesize,
    width=5.5cm,
    height=5.5cm,
    at={(6cm,0cm)},
    axis lines = left,
    xlabel = Number of snapshots,
    xmax = 100,
    xtick={10,20,30,40,50,60,70,80,90,100},
    title={\footnotesize (b) $k=4$},
    title style={at={(0.5,-0.45)},anchor=south},
    ymajorgrids=true,
    grid style=dashed,
    ymode=log
]

\addplot [
    color=orange,
    mark=x,
    thick
]
coordinates {
(10,0.12562)
(20,0.10816)
(30,0.10138)
(40,0.094625)
(50,0.093461)
(60,0.088902)
(70,0.085252)
(80,0.08522)
(90,0.081699)
(100,0.079755)
};

\addplot [
    color=red,
    mark=o,
    thick
]
coordinates {
(10,0.093941)
(20,0.080919)
(30,0.071813)
(40,0.069102)
(50,0.065172)
(60,0.064046)
(70,0.061189)
(80,0.060603)
(90,0.056396)
(100,0.054247)
};

\addplot [
    color=blue,
    mark=asterisk,
    thick
]
coordinates {
(10,0.054032)
(20,0.033346)
(30,0.025821)
(40,0.023148)
(50,0.021867)
(60,0.018781)
(70,0.020224)
(80,0.018874)
(90,0.018875)
(100,0.017609)
};

\addplot [
    color=green,
    mark=diamond,
    thick
]
coordinates {
(10,0.046196)
(20,0.023829)
(30,0.017057)
(40,0.015377)
(50,0.013863)
(60,0.012209)
(70,0.012295)
(80,0.011644)
(90,0.010871)
(100,0.011416)
};
\addplot [
    color=violet,
    mark=+,
    thick
]
coordinates {
(10,0.0547)
(20,0.042323)
(30,0.03523)
(40,0.035876)
(50,0.033746)
(60,0.03499)
(70,0.034394)
(80,0.034839)
(90,0.034404)
(100,0.036233)
};
\end{axis}

\begin{axis}[
    font=\footnotesize,
    width=5.5cm,
    height=5.5cm,
    at={(12cm,0cm)},
    axis lines = left,
    xlabel = Number of snapshots,
    xmax = 100,
    xtick={10,20,30,40,50,60,70,80,90,100},
    title={\footnotesize (c) $k=6$},
    title style={at={(0.5,-0.45)},anchor=south},
    ymajorgrids=true,
    grid style=dashed,
    ytick={10^(-1.8),10^(-1.5),10^(-1.2)},
    ymode=log
]

\addplot [
    color=orange,
    mark=x,
    thick
]
coordinates {
(10,0.086636)
(20,0.079963)
(30,0.076232)
(40,0.074676)
(50,0.071672)
(60,0.070356)
(70,0.069332)
(80,0.068886)
(90,0.067196)
(100,0.066527)
};

\addplot [
    color=red,
    mark=o,
    thick
]
coordinates {
(10,0.082388)
(20,0.07515)
(30,0.070573)
(40,0.069208)
(50,0.066406)
(60,0.065555)
(70,0.064117)
(80,0.062541)
(90,0.061672)
(100,0.059843)
};

\addplot [
    color=blue,
    mark=asterisk,
    thick
]
coordinates {
(10,0.060576)
(20,0.046683)
(30,0.038886)
(40,0.035482)
(50,0.032624)
(60,0.030952)
(70,0.02922)
(80,0.028581)
(90,0.028078)
(100,0.027354)
};

\addplot [
    color=green,
    mark=diamond,
    thick
]
coordinates {
(10,0.061057)
(20,0.04585)
(30,0.037107)
(40,0.034074)
(50,0.030667)
(60,0.029239)
(70,0.029109)
(80,0.027223)
(90,0.026961)
(100,0.026343)
};
\addplot [
    color=violet,
    mark=+,
    thick
]
coordinates {
(10,0.058654)
(20,0.039293)
(30,0.029317)
(40,0.024045)
(50,0.020326)
(60,0.018313)
(70,0.017191)
(80,0.016169)
(90,0.014643)
(100,0.013939)
};
\end{axis}

\end{tikzpicture}
\caption{MSE vs. number of snapshots. Despite these models are trained with their corresponding loss functions at $T=50$ snapshots, they are able to generalize to unseen numbers of snapshots. The proposed SI-Cov outperforms Cov, showing the advantage of using the scale-invariant strategy.}
\label{fig:mse_vs_snapshots}
\end{figure*}
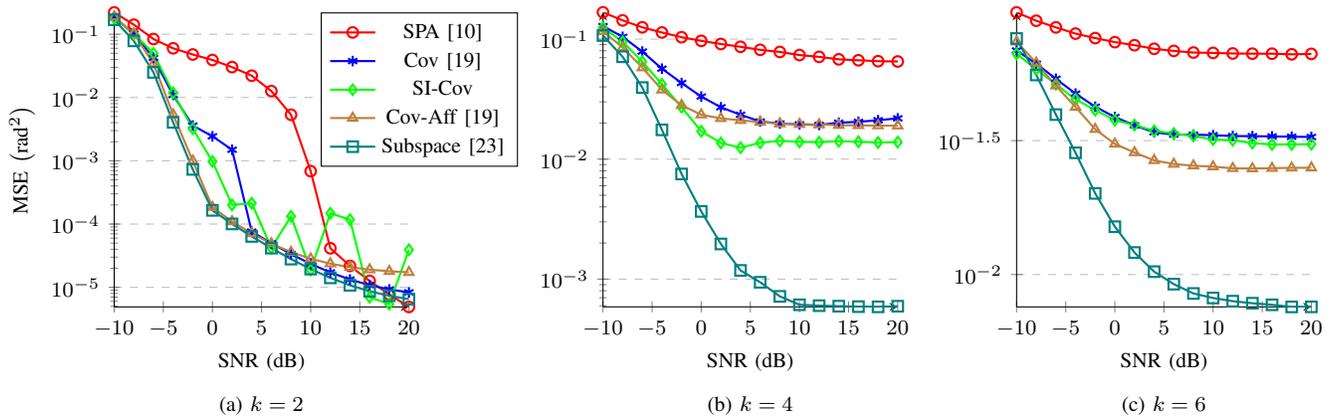
\begin{figure*}[!htb]
\centering
\begin{tikzpicture}
\begin{axis}[
    font=\footnotesize,
    width=5.5cm,
    height=5.5cm,
    at={(0cm,0cm)},
    axis lines = left,
    xlabel = SNR (dB),
    legend style={at={(1.37,1.0)},anchor=north east},
    ylabel = MSE $\left(\text{rad}^2\right)$,
    xmax = 20,
    xtick={-10,-5,0,5,10,15,20},
    title={\footnotesize (a) $k=2$},
    title style={at={(0.5,-0.45)},anchor=south},
    ymajorgrids=true,
    grid style=dashed,
    ymode=log
]

\addplot [
    color=red,
    mark=o,
    thick
]
coordinates {
(-10,0.22133)
(-8,0.14157)
(-6,0.083972)
(-4,0.059928)
(-2,0.047836)
(0,0.039159)
(2,0.030224)
(4,0.022114)
(6,0.012627)
(8,0.0053382)
(10,0.00068585)
(12,4.1451e-05)
(14,2.1805e-05)
(16,1.2606e-05)
(18,7.7259e-06)
(20,4.899e-06)
};

\addlegendentry{SPA \cite{yang2014discretization}}

\addplot [
    color=blue,
    mark=asterisk,
    thick
]
coordinates {
(-10,0.18579)
(-8,0.10209)
(-6,0.042301)
(-4,0.010954)
(-2,0.0036017)
(0,0.002439)
(2,0.0014945)
(4,7.5926e-05)
(6,4.975e-05)
(8,3.355e-05)
(10,2.3451e-05)
(12,1.7205e-05)
(14,1.3286e-05)
(16,1.0818e-05)
(18,9.2696e-06)
(20,8.3221e-06)
};

\addlegendentry{Cov \cite{barthelme2021doa}}

\addplot [
    color=green,
    mark=diamond,
    thick
]
coordinates {
(-10,0.18019)
(-8,0.10141)
(-6,0.049856)
(-4,0.011877)
(-2,0.0032144)
(0,0.00097577)
(2,0.0002006)
(4,0.0002135)
(6,4.1946e-05)
(8,0.00013252)
(10,1.8851e-05)
(12,0.00014798)
(14,0.00011747)
(16,6.9654e-06)
(18,5.4302e-06)
(20,3.9211e-05)
};

\addlegendentry{SI-Cov}

\addplot [
    color=brown,
    mark=triangle,
    thick
]
coordinates {
(-10,0.18944)
(-8,0.10209)
(-6,0.034245)
(-4,0.0053805)
(-2,0.00099794)
(0,0.00018454)
(2,0.00010949)
(4,7.1229e-05)
(6,4.8994e-05)
(8,3.598e-05)
(10,2.8298e-05)
(12,2.3659e-05)
(14,2.0808e-05)
(16,1.904e-05)
(18,1.7938e-05)
(20,1.7251e-05)
};

\addlegendentry{Cov-Aff \cite{barthelme2021doa}}

\addplot [
    color=teal,
    mark=square,
    thick
]
coordinates {
(-10,0.16951)
(-8,0.078538)
(-6,0.024939)
(-4,0.0040453)
(-2,0.00073068)
(0,0.00016467)
(2,0.00010045)
(4,6.3738e-05)
(6,4.1599e-05)
(8,2.794e-05)
(10,1.9435e-05)
(12,1.4113e-05)
(14,1.0772e-05)
(16,8.6691e-06)
(18,7.342e-06)
(20,6.5047e-06)
};

\addlegendentry{Subspace \cite{chen2024subspace}}

\end{axis}

\begin{axis}[
    font=\footnotesize,
    width=5.5cm,
    height=5.5cm,
    at={(6.5cm,0cm)},
    axis lines = left,
    xlabel = SNR (dB),
    xmax = 20,
    xtick={-10,-5,0,5,10,15,20},
    title={\footnotesize (b) $k=4$},
    title style={at={(0.5,-0.45)},anchor=south},
    ymajorgrids=true,
    grid style=dashed,
    ymode=log
]

\addplot [
    color=red,
    mark=o,
    thick
]
coordinates {
(-10,0.16726)
(-8,0.14328)
(-6,0.12544)
(-4,0.11355)
(-2,0.10373)
(0,0.096713)
(2,0.091096)
(4,0.086303)
(6,0.081437)
(8,0.078186)
(10,0.073667)
(12,0.071142)
(14,0.067854)
(16,0.066641)
(18,0.065481)
(20,0.065101)
};

\addplot [
    color=blue,
    mark=asterisk,
    thick
]
coordinates {
(-10,0.12834)
(-8,0.10454)
(-6,0.078785)
(-4,0.056708)
(-2,0.043089)
(0,0.033224)
(2,0.027041)
(4,0.023417)
(6,0.020578)
(8,0.019815)
(10,0.019482)
(12,0.019281)
(14,0.02012)
(16,0.02045)
(18,0.021105)
(20,0.021867)
};

\addplot [
    color=green,
    mark=diamond,
    thick
]
coordinates {
(-10,0.12494)
(-8,0.095219)
(-6,0.064789)
(-4,0.042099)
(-2,0.027083)
(0,0.017045)
(2,0.013648)
(4,0.012448)
(6,0.013704)
(8,0.014201)
(10,0.013935)
(12,0.013856)
(14,0.014104)
(16,0.013962)
(18,0.013757)
(20,0.013863)
};

\addplot [
    color=brown,
    mark=triangle,
    thick
]
coordinates {
(-10,0.11614)
(-8,0.085133)
(-6,0.057915)
(-4,0.037792)
(-2,0.028084)
(0,0.023459)
(2,0.021861)
(4,0.021063)
(6,0.020361)
(8,0.019739)
(10,0.019551)
(12,0.019427)
(14,0.019168)
(16,0.019112)
(18,0.019038)
(20,0.018974)
};

\addplot [
    color=teal,
    mark=square,
    thick
]
coordinates {
(-10,0.10711)
(-8,0.071279)
(-6,0.039495)
(-4,0.017544)
(-2,0.0075213)
(0,0.0036638)
(2,0.0019642)
(4,0.0011819)
(6,0.00094144)
(8,0.00071886)
(10,0.00061165)
(12,0.00060119)
(14,0.00059526)
(16,0.00059015)
(18,0.00058779)
(20,0.00059483)
};
\end{axis}

\begin{axis}[
    font=\footnotesize,
    width=5.5cm,
    height=5.5cm,
    at={(12cm,0cm)},
    axis lines = left,
    xlabel = SNR (dB),
    xmax = 20,
    xtick={-10,-5,0,5,10,15,20},
    title={\footnotesize (c) $k=6$},
    title style={at={(0.5,-0.45)},anchor=south},
    ymajorgrids=true,
    grid style=dashed,
    ytick={10^(-2.0),10^(-1.5),10^(-1.0)},
    ymode=log
]

\addplot [
    color=red,
    mark=o,
    thick
]
coordinates {
(-10,0.095178)
(-8,0.088579)
(-6,0.083672)
(-4,0.079309)
(-2,0.076308)
(0,0.073705)
(2,0.071789)
(4,0.069603)
(6,0.068354)
(8,0.067403)
(10,0.067193)
(12,0.066859)
(14,0.066749)
(16,0.066723)
(18,0.066382)
(20,0.066607)
};

\addplot [
    color=blue,
    mark=asterisk,
    thick
]
coordinates {
(-10,0.068616)
(-8,0.061247)
(-6,0.053786)
(-4,0.047358)
(-2,0.042304)
(0,0.038724)
(2,0.03604)
(4,0.033969)
(6,0.033295)
(8,0.03328)
(10,0.033125)
(12,0.032887)
(14,0.032847)
(16,0.032781)
(18,0.0327)
(20,0.032624)
};

\addplot [
    color=green,
    mark=diamond,
    thick
]
coordinates {
(-10,0.066659)
(-8,0.058419)
(-6,0.051292)
(-4,0.045123)
(-2,0.041089)
(0,0.037693)
(2,0.03616)
(4,0.034388)
(6,0.033615)
(8,0.032912)
(10,0.031884)
(12,0.031729)
(14,0.030948)
(16,0.030597)
(18,0.030606)
(20,0.030667)
};

\addplot [
    color=brown,
    mark=triangle,
    thick
]
coordinates {
(-10,0.073982)
(-8,0.061633)
(-6,0.050681)
(-4,0.042184)
(-2,0.034881)
(0,0.030737)
(2,0.028546)
(4,0.026677)
(6,0.02585)
(8,0.025481)
(10,0.025281)
(12,0.024921)
(14,0.024872)
(16,0.024953)
(18,0.025024)
(20,0.025039)
};

\addplot [
    color=teal,
    mark=square,
    thick
]
coordinates {
(-10,0.07606)
(-8,0.055581)
(-6,0.039546)
(-4,0.028476)
(-2,0.020087)
(0,0.01509)
(2,0.012085)
(4,0.010253)
(6,0.0092099)
(8,0.0085001)
(10,0.008177)
(12,0.0079564)
(14,0.0078182)
(16,0.0077133)
(18,0.0075745)
(20,0.0075788)
};

\end{axis}
\end{tikzpicture}
\caption{The subspace loss outperforms all the other loss functions designed for covariance matrix reconstruction. In general, increasing the degrees of invariance leads to a better optimization landscape and yields better performance. SI-Cov and Cov-Aff have mixed results.}
\label{fig:all_methods}
\end{figure*}
\paragraph{Settings} Based on the settings in Section \ref{sec:assumptions}, we use a $4$-element MRA with $\mathcal{S}=\{1,2,5,7\}$ ($n=4$ and $m=7$) and assume equal source powers. The signal-to-noise ratio (SNR) is defined as $10\log_{10}\frac{p_1}{\eta}$. For evaluation, every direction $\theta_i$ in $\theta_1,\theta_2,\cdots,\theta_k$ is uniformly sampled at random within $\left[\frac{\pi}{6},\frac{5\pi}{6}\right]$ such that $\min_{i\neq j}\lvert\theta_i-\theta_j\rvert\geq\frac{\pi}{45}$. For each $k$, we generate $100$ random vectors of directions, and for each random vector, we sample $100$ random source signals and noises, giving $N=10,000$ random trials in total. The evaluation metric is the mean squared error (MSE) defined as
$
    \frac{1}{N}\sum_{i=1}^{N}\frac{1}{k}\min_{\mathbf{P}\in\mathcal{P}_k}\left\lVert\mathbf{P}\hat{\boldsymbol{\theta}}_{i}-\boldsymbol{\theta}_{i}\right\rVert_2^2
$
where $\mathcal{P}_k$ is the set of all $k$-by-$k$ permutation matrices. $\boldsymbol{\theta}_{i}$ and $\hat{\boldsymbol{\theta}}_{i}$ are the ground-truth and prediction vectors of directions, respectively.

\paragraph{Methods} For optimization-based methods, there are numerous choices. We consider redundancy averaging and direct augmentation (DA) \cite{pillai1985new} for its simplicity and SPA \cite{yang2014discretization} for its competitive performance. The SDPT3 solver \cite{toh1999sdpt3} in CVX \cite{cvx,gb08} is used to solve the SDP problem in SPA. Regarding deep learning-based methods, we include the covariance reconstruction approach \cite{barthelme2021doa} and the subspace representation learning method (the subspace loss or ``Subspace'') \cite{chen2024subspace}. (\ref{eq:positive_definite_g}) with $\delta=0$ is used to ensure the positive semidefiniteness of the reconstructed matrix. For the covariance reconstruction approach, both the Frobenius norm and the affine invariant distance are included, and denoted by ``Cov'' and ``Cov-Aff,'' respectively. For Cov-Aff, we use $h(\mathbf{E})=\mathbf{E}+10^{-4}\mathbf{I}$ to ensure the positive definiteness. For all methods, the root-MUSIC algorithm is used to find directions. We denote the proposed loss functions in (\ref{eq:si_loss}) and (\ref{eq:si_loss_sig}) as ``SI-Cov'' and ``SI-Sig,'' respectively. $\epsilon=0$.

\paragraph{Models} 
We use the wide residual network (WRN) with $16$ layers and an expansion factor of $8$, or WRN-16-8 \cite{zagoruyko2016wide,he2016identity} without batch normalization (see \cite{chen2021resnests} for its guarantees). The number of parameters in WRN-16-8 is about 11 million. The network takes a real-valued tensor in $\mathbb{R}^{2\times n\times n}$ and outputs a real-valued tensor in $\mathbb{R}^{2\times m\times m}$. Complex matrices are converted into their corresponding real representations by splitting into real and imaginary parts, and vice versa. Only one DNN is trained for each method and the same architecture is used for all loss functions.
\paragraph{Training} The one-cycle learning rate scheduler \cite{smith2019super} is used. The batch size is $4,096$. The learning rates for Cov, SI-Cov, SI-Sig, Cov-Aff, and Subspace are $0.01$, $0.05$, $0.2$, $0.005$, and $0.1$, respectively. These learning rates were found to be optimal in terms of empirical risk on the validation set, determined through grid search. The training and validation sets contain $2\times 10^{6}$ and $6\times 10^{5}$ examples, respectively, for each $k\in[m-1]$. All examples are created at $T=50$ and follow the assumptions in Section \ref{sec:assumptions}. The directions are randomly generated within the range $\left[\frac{\pi}{6},\frac{5\pi}{6}\right]$ such that $\min_{i\neq j}\lvert\theta_i-\theta_j\rvert\geq\frac{\pi}{60}$. Every example uses a random SNR in dB sampled from the finite set $\{-11,-9,\cdots,19,21\}$. We use PyTorch \cite{paszke2019pytorch} to implement models and training. Code is available at {\fontfamily{qcr}\selectfont
\url{https://github.com/kjason/SubspaceRepresentationLearning}}.

\paragraph{Benefits of scale invariances}
Fig. \ref{fig:mse_vs_snr} shows the MSEs evaluated on SNRs $\{-10,-8,\cdots,18,20\}$ at $T=50$ snapshots. In general, SI-Cov outperforms Cov, verifying the advantage of expanding the solution space via scale invariance. In particular, SI-Cov is significantly better than Cov for $k=3,4,5$. For $k=1$, SI-Cov and Cov have the same performance. For $k=6$, SI-Cov slightly outperforms Cov. For $k=2$, SI-Cov is worse than Cov in high SNR regions due to its unstable behavior. The cause of this behavior is not known and we leave it for future work. On the other hand, SI-Sig has mixed results. It outperforms SI-Cov for $k=6$ but underperforms for $k=3$ and $k=4$. In general, all deep learning-based methods here significantly outperform DA and SPA.
Fig. \ref{fig:mse_vs_snapshots} evaluates MSEs on different numbers of snapshots at $20$ dB SNR. It shows that deep learning-based approaches can generalize their performance to unseen numbers of snapshots. Models trained at $50$ snapshots are able to generalize well from $10$ to $100$ snapshots. In comparison, SI-Cov generally outperforms Cov due to its scale-invariant property. SI-Sig underperforms SI-Cov for $k=1$ and $k=4$ but it significantly outperforms all the other methods for $k=6$.

\paragraph{On greater degrees of invariance}
Fig. \ref{fig:all_methods} compares MSEs over a wide range of SNRs at $T=50$ snapshots. Overall, subspace representation learning significantly outperforms all the other methods that rely on covariance matrix reconstruction. Because a subspace remains the same regardless of which basis is chosen, the subspace loss exhibits the greatest degrees of invariance, substantially expanding the solution space. Cov-Aff in most cases outperforms Cov, suggesting the advantage of the invariance properties ensured by Proposition \ref{prop:aff_dist_scaling}. For SI-Cov and Cov-Aff, Fig. \ref{fig:all_methods} shows mixed results, which may be attributed to their invariance properties leading to optimization landscapes of similar quality.

\paragraph{Other MRAs}
A study of the $5$-element MRA with $\mathcal{S}=\{1,2,5,8,10\}$ was also conducted although its numerical results are not included here (available in the code repository). The findings indicate that SI-Cov and SI-Sig generally outperform Cov, with the subspace representation learning method achieving the best performance among all approaches. The main conclusions drawn from the $4$-element MRA are consistent with those of the $5$-element MRA.
\section{Conclusion}
A new family of scale-invariant loss functions is proposed for gridless DoA estimation using SLAs. With scale-invariant losses, we study several loss functions and analyze how invariance properties of a loss can play an important role in shaping the optimization landscape of a DNN model. Numerical results show that the subspace loss is better than the scale-invariant losses and the affine invariant distance, and the scale-invariant losses outperform the Frobenius norm that does not have an invariance property. These observations provide evidence that greater invariance enhances a DNN's solution space, improving performance in gridless DoA estimation.

\bibliographystyle{IEEEtran}
\bibliography{IEEEabrv,ref}

\end{document}